\documentclass[conference]{IEEEtran}

\usepackage{cite}
\usepackage{amsmath,amssymb,amsfonts,amsthm}
\usepackage{bm}
\usepackage{algorithm}
\usepackage{algpseudocode}
\usepackage{graphicx}
\usepackage{textcomp}
\usepackage{xcolor}
\usepackage{booktabs}

\newtheorem{theorem}{Theorem}
\newtheorem{corollary}{Corollary}
\newtheorem{definition}{Definition}

\newtheorem{remark}{Remark}

\def\BibTeX{{\rm B\kern-.05em{\sc i\kern-.025em b}\kern-.08em
    T\kern-.1667em\lower.7ex\hbox{E}\kern-.125emX}}
\algrenewcommand\algorithmicrequire{\textbf{Require:}}
\algrenewcommand\algorithmicensure{\textbf{Output:}}
\begin{document}
\pagestyle{empty}
\setlength{\abovedisplayskip}{6pt plus 2pt minus 2pt}
\setlength{\belowdisplayskip}{6pt plus 2pt minus 2pt}
\setlength{\abovedisplayshortskip}{4pt plus 2pt minus 2pt}
\setlength{\belowdisplayshortskip}{4pt plus 2pt minus 2pt}
\setlength{\jot}{8pt}

\title{From Bit to Block: Capacity Achievement via \\ Code Concatenation}

\author{
\IEEEauthorblockN{Bin Zhang}
\IEEEauthorblockA{\textit{Dept. of Computer Science} \\
\textit{Xiangtan University}\\
Xiangtan 411105, China \\
binzhang.it@gmail.com}
}

\maketitle
\thispagestyle{empty}

\begin{abstract}
		This paper shows that bit-level reliability can be converted into block-level reliability for binary codes over BMS channels through Forney's concatenation scheme.  
		The construction concatenates an inner code of rate $R-o(1)$ and vanishing bit-error probability $\epsilon$ with an outer code of rate $1-o(1)$ equipped with a bounded-distance decoder. 
		We show that, when the outer correction radius exceeds $\epsilon$ and a suitable large-deviation condition holds, the concatenated code has rate $R-o(1)$ and vanishing block-error probability.  
		Hence, a family with vanishing bit-error probability for rates below capacity can be converted into a capacity-achieving concatenated-code family.  
		We further show that, when binary BCH codes are used as outer codes, inner codes with bit-error probability
		$\epsilon=o(1/\log\log n)$ can be converted into a capacity-achieving concatenated family.
	\end{abstract}
	
	\begin{IEEEkeywords}
		Capacity-achieving codes, binary memoryless symmetric channels,
		concatenated codes, bit-error probability, block-error probability.
	\end{IEEEkeywords}
	
	\section{Introduction}
	
	Shannon introduced channel capacity as the largest rate at which messages can be
	transmitted over a noisy channel with vanishing block-error probability
	\cite{shannon1948}. Since then, a central goal of coding theory has been to find
	explicit code families that achieve this limit. Polar codes, introduced by
	Arikan, were the first explicit codes proved to achieve the capacity of binary
	memoryless symmetric (BMS) channels \cite{arikan2009}. On the binary erasure
	channel (BEC), Kudekar, Kumar, Mondelli, Pfister, Sasoglu, and Urbanke proved
	that doubly transitive linear codes achieve capacity under bit-MAP decoding and RM codes achieve capacity under block-MAP decoding~\cite{kudekar2017reed}. For general BMS channels, the capacity-achieving
	behavior of RM codes was established through the bit-error result of Reeves
	and Pfister~\cite{reeves2024reed} and the block-error result of Abbe and
	Sandon~\cite{abbe2020proof}. Related highly symmetric algebraic code families
	have also been studied for erasure channels: Natarajan and Krishnan showed
	that certain Abelian codes achieve BEC capacity \cite{natarajan2022abelian}
	and later introduced Berman codes, a generalization of RM codes that achieves
	BEC capacity \cite{natarajan2023berman}. More recently, Abbe, Li, and Sly
	introduced Tensor Reed--Muller codes and showed that they achieve capacity \cite{abbe2026tensor}.
	
	In several capacity-achieving results, bit-level reliability is established
	before block-level reliability.  For example, the BEC result above first proves
	vanishing bit-error probability below capacity under bit-MAP decoding for
	doubly transitive linear codes, and then derives the block-MAP statement for RM
	codes.  For BMS channels, Reeves and Pfister proved that RM codes have
	vanishing bit-error probability below capacity before Abbe and Sandon
	established the corresponding block-error statement.  This raises a natural
	question: can bit-level reliability be converted into block-level
	reliability?  A direct union bound gives such a conversion only when the
	bit-error probability $\epsilon_n$ satisfies $n\epsilon_n\to0$.  For erasure
	channels, Pfister, Sprumont, and Zemor studied this question by relating the
	block-error threshold of a linear code on the BEC to its bit-error threshold
	\cite{pfister2025bitblock}.
	
	This paper shows that, for BMS channels, Forney's concatenation can
	convert bit-level reliability into block-level reliability.  The construction
	combines an inner code of rate $R-o(1)$ and uniform bit-error probability
	$\epsilon$ with an outer code of rate $1-o(1)$ and bounded-distance
	correction capability.  
	The resulting concatenated code has rate $R-o(1)$ and can be decoded first by the inner decoder and then by the outer decoder.
	We give a sufficient condition for 	this concatenated code to have vanishing block-error
	probability: the outer correction radius must exceed $\epsilon$, and the outer
	block length must make the binomial large-deviation exponent dominate the union
	bound over columns.  Consequently, any inner family with vanishing bit-error
	probability for rates below capacity, if it satisfies this condition, yields a
	capacity-achieving concatenated family.  Moreover, for BCH outer codes, this condition
	holds whenever the inner bit-error probability satisfies
	$\epsilon=o(1/\log\log n)$, where $n$ is the inner block length; this is much
	slower than the $o(1/n)$ decay required by a direct union bound.

	The contributions of this work are summarized as follows:
	\begin{itemize}
		\item It reveals that Forney's concatenation can convert bit-level reliability of the inner code into block-level reliability of the concatenated code while preserving the asymptotic rate, and gives the corresponding sufficient condition.
		
		\item It shows that BCH outer codes support this conversion 
		when the inner bit-error probability vanishes as
		$o(1/\log\log n)$.
	\end{itemize}
	
	The rest of the paper is organized as follows. Section~\ref{sec:preliminaries}
	fixes notation and reviews the required preliminaries. Section~\ref{sec:main-theorem} presents
	the general bit-to-block concatenation theorem.
	Section~\ref{sec:bch-outer} discusses the reliability threshold supported by
	BCH outer codes. Section~\ref{sec:concluding-remarks}
	concludes the paper.
	
	\section{Preliminaries}
	\label{sec:preliminaries}
	For a prime power $q$, we use $\mathbb F_q$ to denote the finite field of size $q$.  
	For a positive integer $n$, we use $\mathbb F_q^n$ to denote the $n$-dimensional vector space over $\mathbb F_q$.  
	In this paper, we focus on $\mathbb F_2$ and $\mathbb F_2^n$.  
	We write $[n]:=\{1,\ldots,n\}$.  For a vector
	$\bm v\in\mathbb F_2^n$, its Hamming weight is denoted by $|\bm v|$.
	We use the standard asymptotic notation $O(\cdot)$, $o(\cdot)$,
	$\Omega(\cdot)$, $\omega(\cdot)$, and $\Theta(\cdot)$: for positive real sequences
	$\{a_\ell\}_{\ell\ge1}$ and $\{b_\ell\}_{\ell\ge1}$,
	$a_\ell=O(b_\ell)$ means that $a_\ell/b_\ell$ is bounded above by a positive constant,
	$a_\ell=o(b_\ell)$ means that
	$\lim_{\ell\to\infty}a_\ell/b_\ell=0$,
	$a_\ell=\Omega(b_\ell)$ means that $a_\ell/b_\ell$ is bounded below by a
	positive constant, $a_\ell=\omega(b_\ell)$ means that
	\begin{equation*}
		\lim_{\ell\to\infty}\frac{a_\ell}{b_\ell}=\infty,
	\end{equation*}
	and $a_\ell=\Theta(b_\ell)$ means that both
	$a_\ell=O(b_\ell)$ and $a_\ell=\Omega(b_\ell)$ hold.  We write
	$a_\ell\asymp b_\ell$ if $a_\ell=\Theta(b_\ell)$.  We also write
	$a_\ell\ll b_\ell$ if $a_\ell=o(b_\ell)$, and
	$a_\ell\gg b_\ell$ if $a_\ell=\omega(b_\ell)$.
	
	\subsection{Definitions on codes}
	
	\begin{definition}[Binary linear code]
		\label{def:code-rate}
		A binary code of block length $n$ is a subset
		$\mathcal C\subseteq\mathbb F_2^n$.  Its elements are called codewords.
		The code is linear if $\mathcal C$ is a linear subspace of
		$\mathbb F_2^n$.  Its rate is
		\begin{equation*}
			R(\mathcal C)=\frac{1}{n}\log_2|\mathcal C|.
		\end{equation*}
		If $\mathcal C$ is linear with dimension $k$, then
		$R(\mathcal C)=k/n$.
		For a channel with output alphabet $\mathcal Y$, a decoder for
		$\mathcal C$ is a mapping
		$\mathsf D:\mathcal Y^n\to\mathbb F_2^n$ whose output is interpreted as an
		estimate of the transmitted codeword.
	\end{definition}
	
	\begin{definition}[Binary memoryless symmetric channel]
		\label{def:bms}
		A binary memoryless symmetric (BMS) channel $W$ has input alphabet
		$\mathbb F_2$, output alphabet $\mathcal Y$, and transition probability
		$W(y|x)$, where $x\in\mathbb F_2$ and $y\in\mathcal Y$.  The channel is
		memoryless, so for a length-$n$ input vector
		$\bm X=(X_1,\ldots,X_n)\in\mathbb F_2^n$, the output vector
		$\bm Y=(Y_1,\ldots,Y_n)\in\mathcal Y^n$ is governed by the product
		transition law
		\begin{equation*}
			W^n(\bm y|\bm x)
			=
			\prod_{i=1}^n W(y_i|x_i).
		\end{equation*}
		The channel is symmetric means that there is
		a bijection $\pi:\mathcal Y\to\mathcal Y$ such that
		$W(y|0)=W(\pi(y)|1)$ for all $y\in\mathcal Y$.  The capacity of $W$ is
		denoted by $C(W)$ and is achieved by the uniform input distribution.
	\end{definition}
	
	\begin{definition}[Block- and bit-error probabilities]
		\label{def:error-probabilities}
		Let $\mathcal C\subseteq\mathbb F_2^n$ be a binary code and let
		$\mathsf D:\mathcal Y^n\to\mathbb F_2^n$ be a decoder for a BMS channel
		$W$ with output alphabet $\mathcal Y$.  Over $W$, the maximal block-error
		probability is
		\begin{equation*}
			P_{\rm block}(\mathcal C,\mathsf D;W)
			=
			\max_{\bm c\in\mathcal C}
			\Pr\{\mathsf D(\bm Y)\ne \bm c\mid \bm X=\bm c\}.
		\end{equation*}
		The maximal bit-error probability is
		\begin{equation*}
			P_{\rm bit}(\mathcal C,\mathsf D;W)
			=
			\max_{\bm c\in\mathcal C}\max_{j\in[n]}
			\Pr\{(\mathsf D(\bm Y))_j\ne c_j\mid \bm X=\bm c\}.
		\end{equation*}
		Thus, saying that $\mathsf D$ has uniform bit-error probability at most
		$\epsilon$ is equivalent to
		$P_{\rm bit}(\mathcal C,\mathsf D;W)\le\epsilon$.
	\end{definition}

	\begin{definition}[Vanishing error probability at asymptotic rate $R$]
		\label{def:reliable-rate}
		Let $\{(\mathcal C_\ell,\mathsf D_\ell)\}_{\ell\ge1}$ be a sequence of
		binary codes and decoders satisfying
		$\lim_{\ell\to\infty}n_\ell=\infty$.  This
		sequence has asymptotic rate $R$ and vanishing block-error probability on
		a BMS channel $W$ if
		\begin{equation*}
			\lim_{\ell\to\infty} R(\mathcal C_\ell)=R
		\end{equation*}
		and
		\begin{equation*}
			\lim_{\ell\to\infty}
			P_{\rm block}(\mathcal C_\ell,\mathsf D_\ell;W)=0.
		\end{equation*}
		It has vanishing bit-error probability on $W$ if
		\begin{equation*}
			\lim_{\ell\to\infty}
			P_{\rm bit}(\mathcal C_\ell,\mathsf D_\ell;W)=0.
		\end{equation*}
		Vanishing block-error probability implies vanishing bit-error probability.
		Conversely, the union bound gives
		\[
			P_{\rm block}(\mathcal C_\ell,\mathsf D_\ell;W)
			\le
			n_\ell P_{\rm bit}(\mathcal C_\ell,\mathsf D_\ell;W).
		\]
		Thus vanishing bit-error probability implies vanishing block-error probability
		whenever
		\[
			\lim_{\ell\to\infty}
			n_\ell P_{\rm bit}(\mathcal C_\ell,\mathsf D_\ell;W)=0.
		\]
		Otherwise, the converse implication need not
		hold.
		
	\end{definition}
	
	\begin{definition}[Capacity achievement]
		\label{def:capacity-achievement}
		A code construction, or a class of code-decoder sequences, achieves
		Shannon capacity on a BMS channel $W$ if, for every target rate
		$R<C(W)$, it contains a sequence with asymptotic rate $R$ and vanishing
		block-error probability on $W$. 
	\end{definition}
	
	\begin{definition}[Bounded-distance decoder]
		\label{def:bounded-distance-decoder}
		Let $\mathcal C\subseteq\mathbb F_2^n$ be a binary code and let $t$ be a
		nonnegative integer.  A decoder
		$\mathsf D:\mathbb F_2^n\to\mathbb F_2^n$ for $\mathcal C$ is a
		$t$-bounded-distance decoder if,
		\begin{equation*}
			\mathsf D(\bm c+\bm e)=\bm c
			\quad
			\text{for all }\bm c\in\mathcal C\text{ and all }
			\bm e\in\mathbb F_2^n
			\text{ with }|\bm e|\le t.
		\end{equation*}
	\end{definition}
	
	\subsection{Forney's Concatenation}
	\label{subsec:forney-concatenation}
	
	Forney's concatenation was introduced as a constructive route to Shannon's
	coding theorem: the goal is to achieve capacity while keeping the complexity of construction,
	encoding, and decoding polynomial in the final block length~\cite{forney1966}.
	
	Let $\mathcal C_1\subseteq\mathbb F_2^{n_1}$ and
	$\mathcal C_2\subseteq\mathbb F_2^{n_2}$ be binary linear codes.  Their
	concatenated code is the row-column code
	\begin{equation*}
		\operatorname{Con}(\mathcal C_2,\mathcal C_1)
		\subseteq \mathbb F_2^{n_2\times n_1},
	\end{equation*}
	whose codewords are binary matrices whose rows belong to $\mathcal C_1$ and
	whose columns belong to $\mathcal C_2$.  More precisely, for
	$\mathbf X\in\mathbb F_2^{n_2\times n_1}$, write
	\begin{equation*}
		\bm X_{i,:}=(X_{i,1},\ldots,X_{i,n_1})\in\mathbb F_2^{n_1}
	\end{equation*}
	for its $i$-th row and
	\begin{equation*}
		\bm X_{:,j}=(X_{1,j},\ldots,X_{n_2,j})\in\mathbb F_2^{n_2}
	\end{equation*}
	for its $j$-th column.  Then
	\begin{equation*}
		\begin{aligned}
		\operatorname{Con}(\mathcal C_2,\mathcal C_1)
		=\{\mathbf X\in\mathbb F_2^{n_2\times n_1}:&
		\ \bm X_{i,:}\in\mathcal C_1,
		\ i\in[n_2],\\
		&\ \bm X_{:,j}\in\mathcal C_2,
		\ j\in[n_1]\}.
		\end{aligned}
	\end{equation*}
	If $\mathcal C_i$ has dimension $k_i$ and rate $R_i$, for $i=1,2$, then
	$\operatorname{Con}(\mathcal C_2,\mathcal C_1)$ has block length
	$N=n_1n_2$, dimension $k_1k_2$, and rate $R_1R_2$.

	\subsection{Chernoff Bounds}
	\label{subsec:chernoff-bounds}
	
	For $0<a,b<1$, let
	\begin{equation*}
		D(a\Vert b)
		=
		a\log\frac{a}{b}+(1-a)\log\frac{1-a}{1-b}
	\end{equation*}
	be the binary relative entropy, with natural logarithms.  The Chernoff bound
	gives, for $a>b$ and $S\sim{\rm Bin}(n,b)$,
	\begin{equation}
		\Pr\{S\ge an\}
		\le
		\exp\{-nD(a\Vert b)\}.
		\label{eq:chernoff-kl}
	\end{equation}
	We also use the estimate
	\begin{equation}
		D(a\Vert b)=a\log\frac{a}{b}-O(a),
		\label{eq:kl-small-estimate}
	\end{equation}
	whenever $a=a_\ell$ and $b=b_\ell$ satisfy $0<b_\ell<a_\ell$,
	$\lim_{\ell\to\infty}a_\ell=0$, and
	$\lim_{\ell\to\infty}a_\ell/b_\ell=\infty$.
	
\section{A General Concatenation Construction}
	\label{sec:main-theorem}
	
	This section shows that Forney's concatenation can transfer bit-level
	reliability of a component code to block-level reliability of the concatenated
	code.
	
	Consider a row code with small uniform bit-error probability and a high-rate
	column code with prescribed bounded-distance correction capability, and form
	their concatenated code as in Section~\ref{subsec:forney-concatenation}.  The
	decoder operates in two stages.  First, the row decoder is applied
	independently to all rows; after this step, each fixed column contains
	independent residual errors across its coordinates.  Second, the column decoder
	corrects these residual errors whenever their number is within its correction
	radius.  The theorem below gives a sufficient condition under which this
	two-stage decoder converts the bit-level reliability of the row code into
	block-level reliability of the concatenated code.
	
	\begin{theorem}[Bit-to-block reliability via concatenation]
		\label{thm:concat-bit-to-block}
		Fix a BMS channel $W$ and a rate $R<C(W)$.  Consider two code--decoder
		sequences
		\[
			\{(\mathcal C_{1,\ell},\mathsf D_{1,\ell})\}_{\ell\ge1},
			\qquad
			\{(\mathcal C_{2,\ell},\mathsf D_{2,\ell})\}_{\ell\ge1},
		\]
		where $\mathcal C_{i,\ell}\subseteq\mathbb F_2^{n_{i,\ell}}$ is binary
		linear for $i\in\{1,2\}$, and suppose that
		\begin{equation*}
			\lim_{\ell\to\infty}R(\mathcal C_{1,\ell})=R,
			\qquad
			\lim_{\ell\to\infty}R(\mathcal C_{2,\ell})=1.
		\end{equation*}
		Suppose that $\mathsf D_{2,\ell}$ corrects every error pattern of weight at
		most $t_\ell$, and define
		\begin{equation*}
			\epsilon_\ell
			=
			P_{\rm bit}(\mathcal C_{1,\ell},\mathsf D_{1,\ell};W),
			\qquad
			\delta_\ell
			=
			\frac{t_\ell}{n_{2,\ell}}.
		\end{equation*}
		Assume that, for all sufficiently large $\ell$,
		\begin{equation*}
			0<\epsilon_\ell<\delta_\ell<1,
		\end{equation*}
		and
		\begin{equation}
			\lim_{\ell\to\infty}
			\frac{n_{2,\ell}
			D(\delta_\ell\Vert\epsilon_\ell)}
			{\log n_{1,\ell}}
			=\infty.
			\label{eq:main-ld-condition}
		\end{equation}
	Let
	$\mathcal C_{{\rm con},\ell}
	=\operatorname{Con}(\mathcal C_{2,\ell},\mathcal C_{1,\ell})$
	and
		let $\mathsf D_{{\rm con},\ell}$ be the decoder in
		Algorithm~\ref{alg:row-column}.  Then
		\begin{equation*}
			\lim_{\ell\to\infty}R(\mathcal C_{{\rm con},\ell})=R
		\end{equation*}
		and
		\begin{equation}
			P_{\rm block}(\mathcal C_{{\rm con},\ell},
			\mathsf D_{{\rm con},\ell};W)
			\le
			n_{1,\ell}
			\exp\{-n_{2,\ell}
			D(\delta_\ell\Vert\epsilon_\ell)\},
			\label{eq:concat-error-bound}
		\end{equation}
		where the right-hand side tends to zero.  Consequently, if such sequences
		exist for every $R<C(W)$, then the resulting concatenation construction
		achieves capacity on $W$.
	\end{theorem}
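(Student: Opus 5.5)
The proof splits into a rate computation and an error analysis of the two-stage decoder; only the second has real content.

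\emph{Rate.} The rate claim follows from the formula in Section~\ref{subsec:forney-concatenation}: $R(\mathcal C_{{\rm con},\ell})=R(\mathcal C_{1,\ell})R(\mathcal C_{2,\ell})$. Its limit is therefore $R\cdot 1=R$.

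\emph{Error bound.} Fix a transmitted codeword $\mathbf X=\mathbf c\in\mathcal C_{{\rm con},\ell}$. Each row $\bm c_{i,:}\in\mathcal C_{1,\ell}$ is sent over $n_{1,\ell}$ uses of $W$. The outputs of distinct rows are independent because the channel is memoryless. Let $\hat{\bm c}_{i,:}=\mathsf D_{1,\ell}(\bm Y_{i,:})$ be the row estimate, and define the residual error indicators $E_{i,j}=\mathbf 1\{\hat c_{i,j}\ne c_{i,j}\}$.

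For a fixed column $j$, the indicators $E_{1,j},\ldots,E_{n_{2,\ell},j}$ are independent across $i$, since each depends only on $\bm Y_{i,:}$. Each satisfies $\Pr\{E_{i,j}=1\}\le\epsilon_\ell$ by the definition of maximal bit-error probability, because $\bm c_{i,:}\in\mathcal C_{1,\ell}$.

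The column decoder receives $\hat{\bm c}_{:,j}=\bm c_{:,j}+\bm E_{:,j}$ with $\bm c_{:,j}\in\mathcal C_{2,\ell}$. By Definition~\ref{def:bounded-distance-decoder}, it returns $\bm c_{:,j}$ whenever $|\bm E_{:,j}|\le t_\ell$. Hence a block error requires some column $j$ with $|\bm E_{:,j}|>t_\ell=\delta_\ell n_{2,\ell}$. A union bound over the $n_{1,\ell}$ columns gives
\[
P_{\rm block}\le\sum_{j=1}^{n_{1,\ell}}\Pr\{|\bm E_{:,j}|\ge\delta_\ell n_{2,\ell}\}.
\]

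\emph{Main technical point.} The entries $E_{i,j}$ are independent Bernoulli with parameters $p_i\le\epsilon_\ell$, not exactly $\epsilon_\ell$, so \eqref{eq:chernoff-kl} does not apply verbatim. I will handle this by a standard coupling. Realize each $E_{i,j}$ as $\mathbf 1\{U_i<p_i\}$ and dominate it by $\mathbf 1\{U_i<\epsilon_\ell\}$, with independent uniform $U_i$. Then $|\bm E_{:,j}|$ is stochastically dominated by $S\sim{\rm Bin}(n_{2,\ell},\epsilon_\ell)$. Since the tail event is monotone, $\Pr\{|\bm E_{:,j}|\ge\delta_\ell n_{2,\ell}\}\le\Pr\{S\ge\delta_\ell n_{2,\ell}\}$. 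Applying \eqref{eq:chernoff-kl} with $\delta_\ell>\epsilon_\ell$ bounds this by $\exp\{-n_{2,\ell}D(\delta_\ell\Vert\epsilon_\ell)\}$.

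Summing over columns and taking the maximum over $\mathbf c$ yields \eqref{eq:concat-error-bound}. The coupling only needs the per-entry bound and independence across rows, so it needs no symmetry assumption beyond the maximal bit-error definition.

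\emph{Vanishing and capacity.} The right-hand side of \eqref{eq:concat-error-bound} equals
\[
\exp\Bigl\{\log n_{1,\ell}\Bigl(1-\frac{n_{2,\ell}D(\delta_\ell\Vert\epsilon_\ell)}{\log n_{1,\ell}}\Bigr)\Bigr\}.
\]
By \eqref{eq:main-ld-condition}, the bracket tends to $-\infty$, and $\log n_{1,\ell}\ge\log 2>0$ whenever $n_{1,\ell}\ge2$, so the exponent tends to $-\infty$ and the bound tends to zero. The capacity statement then follows directly from Definition~\ref{def:capacity-achievement}, applied for each $R<C(W)$.
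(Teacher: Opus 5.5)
Your proposal is correct and follows essentially the same argument as the paper: independence of the residual row-decoding errors within each column, domination of each column count by ${\rm Bin}(n_{2,\ell},\epsilon_\ell)$, the Chernoff bound~\eqref{eq:chernoff-kl}, a union bound over the $n_{1,\ell}$ columns, and the product rate formula. Your explicit coupling for the stochastic domination, your handling of $>$ versus $\ge$, and your observation that $\log n_{1,\ell}\ge\log 2$ only fill in steps the paper states without detail.
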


	As in Section~\ref{subsec:forney-concatenation}, subscripts $i,:$ and $:,j$
	denote the $i$-th row and the $j$-th column of a matrix, respectively.
	\begin{algorithm}[t]
		\caption{Two-stage decoder for concatenated code}
		\label{alg:row-column}
		\begin{algorithmic}[1]
			\Require Channel output matrix $\mathbf Y\in\mathcal Y^{n_2\times n_1}$,
			where $\bm y_{i,:}$ denotes its $i$-th row; row decoder
			$\mathsf D_1$ for $\mathcal C_1$; column decoder $\mathsf D_2$ for
			$\mathcal C_2$ correcting $t$ errors.
			\Ensure Decoded concatenated-code matrix $\widetilde{\mathbf X}$.
			\State Initialize the row-decoded matrix
			$\widehat{\mathbf X}\in\mathbb F_2^{n_2\times n_1}$.
			\For{$i=1,\ldots,n_2$}
			\State $\widehat{\bm x}_{i,:}\gets
			\mathsf D_1(\bm y_{i,:})$.
			\EndFor
			\For{$j=1,\ldots,n_1$}
			\State $\widetilde{\bm x}_{:,j}\gets
			\mathsf D_2(\widehat{\bm x}_{:,j})$.
			\EndFor
		\end{algorithmic}
	\end{algorithm}
	
	\begin{proof}[Proof of Theorem~\ref{thm:concat-bit-to-block}]
		Fix $\ell$ and a transmitted codeword
		$\mathbf X\in\mathcal C_{{\rm con},\ell}$.  After the row-decoding step,
		define
		\begin{equation*}
			E_{i,j}=\mathbf 1\{\widehat X_{i,j}\ne X_{i,j}\}
		\end{equation*}
		and the residual column-error count
		\begin{equation*}
			S_j=\sum_{i=1}^{n_{2,\ell}}
			E_{i,j},
			\qquad j\in[n_{1,\ell}].
		\end{equation*}
		For each fixed $j$, the variables
		$E_{1,j},\ldots,E_{n_{2,\ell},j}$ are independent because the channel is
		memoryless across rows and $\mathsf D_{1,\ell}$ acts separately on each row.
		Moreover, the definition of $\epsilon_\ell$ gives
		$\Pr\{E_{i,j}=1\}\le\epsilon_\ell$ for all $i$ and $j$.  Therefore $S_j$
		is stochastically dominated by
		${\rm Bin}(n_{2,\ell},\epsilon_\ell)$.
		
		Since $t_\ell=\delta_\ell n_{2,\ell}$ and
		$\delta_\ell>\epsilon_\ell$, the Chernoff bound in~\eqref{eq:chernoff-kl}
		gives
		\begin{equation*}
			\Pr\left\{S_j>t_\ell\right\}
			\le
			\exp\left\{-n_{2,\ell}
			D(\delta_\ell\Vert\epsilon_\ell)\right\}.
		\end{equation*}
		Let $F_j=\{S_j>t_\ell\}$ be the event that column $j$ exceeds the
		correction radius.  If none of the events
		$F_1,\ldots,F_{n_{1,\ell}}$ occurs, then every column decoder restores the
		transmitted column exactly.  Hence
		\begin{equation*}
			P_{\rm block}(\mathcal C_{{\rm con},\ell},\mathsf D_{{\rm con},\ell};W)
			\le
			\Pr\left\{\bigcup_{j=1}^{n_{1,\ell}}F_j\right\}
			\le
			\sum_{j=1}^{n_{1,\ell}}\Pr\{F_j\},
		\end{equation*}
		which proves~\eqref{eq:concat-error-bound}. Note that independence among the events
		$F_j$ is not required for using the union bound.  By~\eqref{eq:main-ld-condition},
		\begin{equation*}
			\log n_{1,\ell}
			-n_{2,\ell}D(\delta_\ell\Vert\epsilon_\ell)
			\longrightarrow-\infty,
		\end{equation*}
		so the bound in~\eqref{eq:concat-error-bound} tends to zero.

		Finally, the concatenated-code dimension formula gives
		\begin{equation*}
			R(\mathcal C_{{\rm con},\ell})
			=R(\mathcal C_{1,\ell})R(\mathcal C_{2,\ell})
			\longrightarrow R.
		\end{equation*}
		If the assumed component sequences exist for every $R<C(W)$, the
		construction achieves capacity by Definition~\ref{def:capacity-achievement}.
	\end{proof}
	
	\begin{remark}
		The concatenation construction trades a small amount of rate for stronger
		reliability.  At finite block length, the column component introduces
		redundancy and reduces the concatenated-code rate from
		$R(\mathcal C_{1,\ell})$ to
		$R(\mathcal C_{1,\ell})R(\mathcal C_{2,\ell})$.  Since
		$R(\mathcal C_{2,\ell})\to1$, this rate loss is asymptotically negligible.
		The benefit of this extra redundancy is that the column decoder cleans the
		residual errors left by the row decoder, thereby upgrading bit-level
		reliability of the row component to block-level reliability of the
		concatenated code.
	\end{remark}

	\begin{remark}
		The condition $\epsilon_\ell<\delta_\ell$ requires the column
		correction radius to exceed the residual bit-error probability after row
		decoding.  Condition~\eqref{eq:main-ld-condition} further requires the
		resulting large-deviation exponent to dominate the union-bound penalty
		$\log n_{1,\ell}$.  Equivalently,
		\begin{equation*}
			n_{2,\ell}
			=
			\omega\left(
			\frac{\log n_{1,\ell}}
			{D(\delta_\ell\Vert\epsilon_\ell)}
			\right).
		\end{equation*}
		If $\delta_\ell\to0$ and
		$\delta_\ell/\epsilon_\ell\to\infty$, then
		\eqref{eq:kl-small-estimate} yields the approximation
		\begin{equation*}
			n_{2,\ell}
			\gg
			\frac{\log n_{1,\ell}}
			{\delta_\ell
			\log(\delta_\ell/\epsilon_\ell)}.
		\end{equation*}
	\end{remark}

	\begin{remark}
		The theorem can be read as a design rule for choosing the column component
		once the row component is fixed.  For a fixed sequence
		$\{(\mathcal C_{1,\ell},\mathsf D_{1,\ell})\}_{\ell\ge1}$, the quantities
		$n_{1,\ell}$ and
		$\epsilon_\ell=P_{\rm bit}(\mathcal C_{1,\ell},\mathsf D_{1,\ell};W)$
		are known.  First choose a target correction fraction
		$\delta_\ell$ satisfying
		$\epsilon_\ell<\delta_\ell<1$.  Then choose the column length
		$n_{2,\ell}$ so that
		\begin{equation*}
			n_{2,\ell}D(\delta_\ell\Vert\epsilon_\ell)
			\gg
			\log n_{1,\ell}.
		\end{equation*}
		Finally, set $t_\ell=\lceil\delta_\ell n_{2,\ell}\rceil$ and choose a
		column code $\mathcal C_{2,\ell}$ of length $n_{2,\ell}$ and rate
		$1-o(1)$ with a bounded-distance decoder correcting $t_\ell$ errors.
		Under these choices, the concatenated decoder has the block-error bound
		\eqref{eq:concat-error-bound}.
	\end{remark}

	\section{BCH Outer Codes}
	\label{sec:bch-outer}

	This section specializes the outer code in Theorem~\ref{thm:concat-bit-to-block}
	to primitive narrow-sense BCH codes.  We quantify the bit-error decay required
	of the inner code when the outer code is BCH.  The result shows that BCH outer
	codes still convert bit-level reliability into block-level reliability when the
	inner uniform bit-error probability vanishes as slowly as
	$o(1/\log\log n_{1,\ell})$.

	Let $s$ and $t$ be positive
	integers, set $n=2^s-1$, and assume $2t+1\le n$.  Then there exists a binary
	primitive narrow-sense BCH code $\mathcal B\subseteq\mathbb F_2^n$ with
	designed distance $2t+1$, redundancy at most $st$, and a bounded-distance
	decoder correcting every error pattern of Hamming weight at most
	$t$~\cite{hocquenghem1959,bose1960,lin2004error}.  Thus
	\begin{equation}
		R(\mathcal B)\ge 1-\frac{st}{n}.
		\label{eq:bch-rate}
	\end{equation}

	\begin{corollary}[BCH outer codes]
		\label{cor:bch-outer}
		Let $\{(\mathcal C_{1,\ell},\mathsf D_{1,\ell})\}_{\ell\ge1}$ be an
		inner code--decoder sequence with inner block length $n_{1,\ell}$ and
		uniform bit-error probability
		\begin{equation*}
			\epsilon_\ell
			=
			P_{\rm bit}(\mathcal C_{1,\ell},\mathsf D_{1,\ell};W).
		\end{equation*}
		If
		\begin{equation}
			\epsilon_\ell\log\log n_{1,\ell}\longrightarrow 0,
			\label{eq:bch-inner-threshold}
		\end{equation}
		then there exists a sequence of primitive narrow-sense BCH outer codes
		$\{\mathcal B_\ell\}_{\ell\ge1}$ of rate tending to one such that the
		corresponding concatenated codes has vanishing block-error probability.
	\end{corollary}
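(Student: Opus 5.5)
We derive the corollary from Theorem~\ref{thm:concat-bit-to-block}. The task is to choose BCH parameters $s_\ell,t_\ell$, with $n_{2,\ell}=2^{s_\ell}-1$ and $\delta_\ell=t_\ell/n_{2,\ell}$, so that three things hold. The rate bound $s_\ell t_\ell/n_{2,\ell}=s_\ell\delta_\ell\to0$ must hold, which by \eqref{eq:bch-rate} gives outer rate tending to one. We also need $\epsilon_\ell<\delta_\ell$. Finally, the large-deviation condition \eqref{eq:main-ld-condition} must hold. The tension is plain: BCH rate forces $\delta_\ell\ll 1/s_\ell\approx 1/\log n_{2,\ell}$. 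The exponent is roughly $n_{2,\ell}\delta_\ell\log(\delta_\ell/\epsilon_\ell)$, and it must beat $\log n_{1,\ell}$. That forces $n_{2,\ell}$ to be at least polylogarithmic in $n_{1,\ell}$, so $s_\ell\approx\log\log n_{1,\ell}$. This is why the threshold $\epsilon_\ell\ll1/\log\log n_{1,\ell}$ appears.

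\textbf{Parameter choice.} Write $L_\ell=\log\log n_{1,\ell}$ and $\eta_\ell=\epsilon_\ell L_\ell\to0$. We may assume $\epsilon_\ell>0$; if $\epsilon_\ell=0$, replace it by a tiny positive upper bound, since the theorem only uses $\epsilon_\ell$ as an upper bound on the bit-error probability. Take
\[
\delta_\ell \approx \frac{\sqrt{\eta_\ell}}{L_\ell},
\]
so that $\delta_\ell/\epsilon_\ell=\eta_\ell^{-1/2}\to\infty$. Take $s_\ell=\lceil c\,L_\ell\rceil$ for a constant $c$, say $c=2$, so that $n_{2,\ell}=2^{s_\ell}-1\ge(\log n_{1,\ell})^{2c'}$ for some $c'>1/2$. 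Set $t_\ell=\lfloor\delta_\ell n_{2,\ell}\rfloor$ and redefine $\delta_\ell=t_\ell/n_{2,\ell}$. Since $n_{2,\ell}\delta_\ell\to\infty$, the floor changes $\delta_\ell$ only by a factor $1+o(1)$. With these choices the rate loss is
\[
s_\ell\delta_\ell\approx cL_\ell\sqrt{\eta_\ell}/L_\ell=c\sqrt{\eta_\ell}\to0,
\]
and $2t_\ell+1\le n_{2,\ell}$ holds trivially.

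\textbf{Exponent check.} We have $\delta_\ell\to0$ and $\delta_\ell/\epsilon_\ell\to\infty$, so \eqref{eq:kl-small-estimate} gives
\[
D(\delta_\ell\Vert\epsilon_\ell)=\delta_\ell\bigl(\tfrac12\log(1/\eta_\ell)-O(1)\bigr)\ge\delta_\ell
\]
for large $\ell$. It then suffices that $n_{2,\ell}\delta_\ell/\log n_{1,\ell}\to\infty$. Now $n_{2,\ell}\delta_\ell\gtrsim(\log n_{1,\ell})^{c\log 2}\sqrt{\eta_\ell}/L_\ell$. Choosing $c$ with $c\log2>1$ (e.g.\ $c=2$) gives a factor $(\log n_{1,\ell})^{\gamma}$ with $\gamma>0$. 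The main obstacle is that this factor must dominate the possibly very slow decay of $\sqrt{\eta_\ell}/L_\ell$. To handle it, I would instead choose $\delta_\ell=\max\{\sqrt{\eta_\ell},(\log n_{1,\ell})^{-\gamma/2}\}/L_\ell$, which keeps $\delta_\ell/\epsilon_\ell\to\infty$ and $s_\ell\delta_\ell\to0$. Then $n_{2,\ell}\delta_\ell\ge(\log n_{1,\ell})^{1+\gamma/2}/L_\ell$, and dividing by $\log n_{1,\ell}$ leaves $(\log n_{1,\ell})^{\gamma/2}/\log\log n_{1,\ell}\to\infty$, which diverges.

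\textbf{Conclusion.} All hypotheses of Theorem~\ref{thm:concat-bit-to-block} now hold, with the rate of $\mathcal C_{1,\ell}$ carried through unchanged. The theorem gives the bound \eqref{eq:concat-error-bound}, which tends to zero, so the concatenated codes have vanishing block-error probability. The outer rate tends to one, so the rate of the inner family is preserved.
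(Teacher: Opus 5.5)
Your proof is correct, but it takes a different route from the paper.

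\textbf{The paper's route.} It ties the correction fraction to the inner error rate, $t_\ell=\lceil 2\epsilon_\ell n_{2,\ell}\rceil$, so that $\delta_\ell/\epsilon_\ell\to2$ and $D(\delta_\ell\Vert\epsilon_\ell)=\Theta(\epsilon_\ell)$. It then lets the outer length grow with $1/\epsilon_\ell$. Concretely, it introduces an auxiliary sequence with $\epsilon_\ell(\log\log n_{1,\ell}+\log(1/\epsilon_\ell))\ll\alpha_\ell\ll1$ and takes $s_\ell\asymp\alpha_\ell/\epsilon_\ell$. The rate loss is then $\asymp\alpha_\ell$, and the exponent $2^{\Theta(\alpha_\ell/\epsilon_\ell)}\Theta(\epsilon_\ell)$ beats $\log n_{1,\ell}$.

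\textbf{Your route.} You fix $s_\ell=\lceil 2\log\log n_{1,\ell}\rceil$ independently of $\epsilon_\ell$. You then spend the rate budget on the correction fraction, pushing $\delta_\ell$ just below $1/s_\ell$ while keeping $\delta_\ell/\epsilon_\ell\to\infty$. The exponent comes from the polynomial surplus $(\log n_{1,\ell})^{2\ln 2-1}$ together with the $\log(\delta_\ell/\epsilon_\ell)$ gain. The $\max$ with $(\log n_{1,\ell})^{-\gamma/2}$ correctly handles rapidly decaying $\epsilon_\ell$. Note that your first choice $\delta_\ell=\sqrt{\eta_\ell}/L_\ell$ can give $t_\ell=0$ (e.g.\ for $\epsilon_\ell=1/n_{1,\ell}$). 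So the claim $n_{2,\ell}\delta_\ell\to\infty$ and the rounding and rate checks must be read with the final $\delta_\ell$, where they do hold.

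\textbf{What your route buys.} The outer length is polylogarithmic in $n_{1,\ell}$ however reliable the inner code is. By contrast, the paper's $n_{2,\ell}=2^{\Theta(\alpha_\ell/\epsilon_\ell)}$ is superpolynomial in $1/\epsilon_\ell$, since $\alpha_\ell/\epsilon_\ell\gg\log(1/\epsilon_\ell)$; so a better inner code forces a longer outer code. Your route also makes the origin of the $\log\log n_{1,\ell}$ threshold transparent. It applies \eqref{eq:kl-small-estimate} inside its stated regime $\delta_\ell/\epsilon_\ell\to\infty$. The paper instead uses an expansion at ratio $2$, where the correction term is $-(\delta_\ell-\epsilon_\ell)$, not $o(\delta_\ell)$; its conclusion $D=\Theta(\epsilon_\ell)$ is nonetheless correct. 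You also handle $\epsilon_\ell=0$ and integer rounding, which the paper glosses over.

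\textbf{What the paper's route buys.} It has one-parameter bookkeeping through $\alpha_\ell$, and a per-symbol exponent that is simply $\Theta(\epsilon_\ell)$.
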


	\begin{proof}
	First choose an sequence $\alpha_\ell$ satisfying
	\begin{equation}
		\epsilon_\ell
		\left(\log\log n_{1,\ell}
		+\log\frac{1}{\epsilon_\ell}\right)
		\ll \alpha_\ell\ll 1 .
		\label{eq:alpha-choice}
	\end{equation}
	Such a choice is possible because~\eqref{eq:bch-inner-threshold} implies
	$\epsilon_\ell\to0$, and hence
	$\epsilon_\ell\log(1/\epsilon_\ell)\to0$.
	
	Choose an integer $s_\ell$ such that
	$s_\ell\asymp \alpha_\ell/\epsilon_\ell$, set
	\begin{equation*}
		n_{2,\ell}=2^{s_\ell}-1,
		\qquad
		t_\ell=\left\lceil 2\epsilon_\ell n_{2,\ell}\right\rceil ,
	\end{equation*}
	and let $\mathcal B_\ell$ be a primitive narrow-sense BCH code of length
	$n_{2,\ell}$ that corrects $t_\ell$ errors.  Since $\epsilon_\ell\to0$, the
	condition $2t_\ell+1\le n_{2,\ell}$ holds for all sufficiently large $\ell$.
	Let
	\[
		\delta_\ell=\frac{t_\ell}{n_{2,\ell}} .
	\]
	Then $\delta_\ell=2\epsilon_\ell+o(\epsilon_\ell)$, so
	$0<\epsilon_\ell<\delta_\ell<1$ eventually.
	
	It remains to verify the two asymptotic hypotheses in
	Theorem~\ref{thm:concat-bit-to-block}.  First, the BCH rate bound gives
	\begin{equation*}
		1-R(\mathcal B_\ell)
		\le
		\frac{s_\ell t_\ell}{n_{2,\ell}}
		\asymp
		s_\ell\epsilon_\ell
		\asymp
		\alpha_\ell
		\longrightarrow0.
	\end{equation*}
	Thus $R(\mathcal B_\ell)\to1$.
	
	Second, since $\delta_\ell=2\epsilon_\ell+o(\epsilon_\ell)$ and $\epsilon_\ell\to0$,
	\[
		D(\delta_\ell\Vert\epsilon_\ell)=\delta_\ell \log(\frac{\delta_\ell}{\epsilon_\ell}) + o(\delta_\ell) = \Theta(\epsilon_\ell).
	\]
	Also, $n_{2,\ell}=2^{s_\ell}-1$ and
	$s_\ell\asymp\alpha_\ell/\epsilon_\ell$, so
	\begin{equation*}
		n_{2,\ell}D(\delta_\ell\Vert\epsilon_\ell)
		=
		\exp\!\left\{\Theta\!\left(\frac{\alpha_\ell}{\epsilon_\ell}\right)\right\}
		\Theta(\epsilon_\ell).
	\end{equation*}
	By~\eqref{eq:alpha-choice},
	\[
		\frac{\alpha_\ell}{\epsilon_\ell}
		\gg
		\log\log n_{1,\ell}+\log\frac{1}{\epsilon_\ell}.
	\]
	Therefore
	\[
	\begin{aligned}
		n_{2,\ell}D(\delta_\ell\Vert\epsilon_\ell)
		&=
		\exp\!\left\{\Theta\!\left(\frac{\alpha_\ell}{\epsilon_\ell}\right)\right\}
		\Theta(\epsilon_\ell) \\
		&=
		\omega\!\left(
		\exp\!\left\{\log\log n_{1,\ell}
		+\log\frac{1}{\epsilon_\ell}\right\}
		\epsilon_\ell
		\right) \\
		&=
		\omega(\log n_{1,\ell}).
	\end{aligned}
	\]
	which is exactly the large-deviation condition
	\eqref{eq:main-ld-condition}.  Theorem~\ref{thm:concat-bit-to-block} then
	gives vanishing block-error probability for the concatenated codes.
	\end{proof}

	\section{Concluding Remarks}
	\label{sec:concluding-remarks}
	
	This paper reveals that Forney's concatenation can convert bit-level
	reliability of an inner code into block-level reliability of the concatenated
	code on BMS channels while preserving the asymptotic rate.  The main theorem
	gives a sufficient condition in terms of the residual bit-error probability,
	the correction radius of the outer code, and the corresponding binomial
	large-deviation exponent.  The BCH specialization further shows that
	rate-one binary BCH outer codes support this conversion whenever the inner
	bit-error probability satisfies $\epsilon=o(1/\log\log n)$.
	
	The result also gives a methodological viewpoint. If a code family can be
	represented, embedded, or approximated through a reliable inner component and a
	high-rate outer code, then block-level reliability of the larger family may follow
	from the concatenation theorem. This perspective may be useful for tensor-type
	constructions and for algebraic code families whose recursive or multilevel
	structure can be interpreted through concatenation operations.

\end{document}